\newtheorem{theorem}{{Theorem}}
\title{On the achievable region for interference networks with point-to-point codes}
\author{Jung Hyun Bae, Jungwon Lee, Inyup Kang\\
Mobile Solutions Lab\\
Samsung US R$\&$D Center\\
San Diego, CA, USA\\
Email: jungbae@umich.edu, jungwon@alumni.stanford.edu, inyup.kang@samsung.com}
\begin{document}
\maketitle
\begin{abstract}
This paper studies evaluation of the capacity region for interference networks with point-to-point (p2p) capacity-achieving codes. Such capacity region has recently been 
characterized as union of several sub-regions each of which has distinctive operational characteristics. Detailed evaluation of this region, therefore, can be accomplished in a very
simple manner by acknowledging such characteristics, which, in turn, provides an insight for a simple implementation scenario. Completely generalized message 
assignment which is also practically relevant is considered in this paper, and it is shown to provide strictly larger achievable rates than what traditional message assignment does 
when a receiver with joint decoding capability is used.
\end{abstract}

\section{Introduction}
Consider the following simple channel model.
\begin{subequations}
\begin{eqnarray}
y_{1,t}&=&h_{11}x_{1,t}+h_{12}x_{2,t}+z_{1,t}\\
y_{2,t}&=&h_{21}x_{1,t}+h_{22}x_{2,t}+z_{2,t},
\end{eqnarray}
\end{subequations}
where at time $t=1,2,3,...$, $x_{i,t}$ is signal from transmitter $i$, $y_{i,t}$ is channel output at receiver $i$, $z_{i,t}$ is background noise at receiver $i$, $h_{ij}$ is channel from transmitter $j$ to receiver $i$, which is a complex number. If we assume that $z_{i,t} \sim \mathcal{CN}(0,1)$, and that $x_{i,t}$ is dedicated to receiver $i$, then this channel model corresponds to single-input single-output (SISO) Gaussian interference channel (GIC). We also impose power restriction $P$ on each transmitter.
SISO GIC is one of the simplest network channel models and still possesses essence of cell-edge scenario of the standard cellular system. Regarding fundamental understanding in terms of the maximum achievable rate, i.e., capacity, even this simple scenario fails to give a conclusive answer so far. What we do know, however, is using point-to-point (p2p) capacity achieving codes (p2p codes, from now on) would not achieve the capacity in this scenario. Nevertheless, we are still interested in discussing achievable rates with p2p codes in this paper. \\
\indent
There are two main reasons why we consider p2p codes. One is manageable analysis and the other is easy implementation. First to note is that we have a conclusive answer on the capacity region with p2p codes in SISO Gaussian case as seen in~\cite{BaGaTs11}. According to recent discovery~\cite{BaGaKi12}, however, we also have such conclusive answer for more sophisticated Han-Kobayashi scheme~\cite{HaKo81} which is also the best known scheme so far. The problem with the latter is that it is generally regarded as not computable. We may decide to work on computable sub-region of it as done in~\cite{Sa04, ShCh07, EtTsWa08}, but what we hope to have in this paper is `easily' computable region which provides insights for `simple' implementation which is an aspect we would like to emphasize to have. \\
\indent
Regarding implementation, it is not difficult to see easiness of it with p2p codes. Basically, they are there already, and there is nothing to change encoding-wise and decoding-wise. A very simple version of Han-Kobayashi scheme as in~\cite{EtTsWa08} would be implementable as well, but its performance in practical transmit power range turns out not to be quite beneficial~\cite{BaLeKa11}. In order to obtain significant gain over p2p codes, one might need to resort to more sophisticated version of Han-Kobayashi scheme which involves coded time sharing or modified frequency/time division multiplexing (FDM/TDM) as discussed in~\cite{Sa04, ShCh07}, but such algorithm is not quite implementation-friendly as hinted by aforementioned computability issue.    \\
\indent
When dealing with p2p codes in this paper, we do not restrict ourselves to traditional interference channel, which means $x_{i,t}$ is not necessarily dedicated to the receiver $i$. In 2 user case, let $(a_1,a_2)$ be the tuple which describes an intended receiver of each transmitter in which $a_1$ represents the intended receiver of the transmitter 1, and $a_2$ represents the intended receiver of the transmitter 2. $a_1$ and $a_2$ can have values $0,1,2$ where $0$ implies that the corresponding transmitter does not transmit. For example, $(0,1)$ means that the transmitter 1 does not transmit, and the transmitter 2 transmits to the receiver 1. In this case, the receiver 2 does nothing. In this way, there are total of 8 possible combinations in 2 user case, which are $(0,1),(0,2),(1,0),(1,1),(1,2),(2,0),(2,1),(2,2)$. One important thing to note is that this setup does not necessarily imply extensive use of backhaul communication in the cellular system. In fact, this kind of arbitrary message assignment can be done by higher layer protocol without any use of backhaul communication, and such relevance of this setup in simple implementation is another thing we would like to emphasize in this paper. \\
\indent
An interesting existing work in similar direction is given in~\cite{MoKh11}. As discussed in~\cite{MoKh11}, computing the achievable region with p2p codes requires exponential complexity with respect to number of users which can be a problem in a large network. The focus of~\cite{MoKh11} is, therefore, to find an efficient way of doing it by using polymatroidal nature of the achievable region, which results in polynomial time algorithm. In this paper, we restrict ourselves to two user interference networks. Note that two or three user cases provide a realistic cell-edge scenario in the standard cellular system as well as a realistic multi-points cooperation scenario in a heterogeneous network due to coordination and decoding capability issue. In the main body of our work, we carefully compute achievable rates with focus on looking at how it varies depending on channel parameters and getting insight of simple implementation in practice. \\
\indent
The rest of the paper is organized as follows. Section~\ref{sec:int} discusses the achievable region with p2p codes for standard interference channel in which traditional message assignment is used. Section~\ref{sec:gen} discusses the case when such message assignment is completely generalized. Section~\ref{sec:con} concludes the paper. 
\paragraph{Preliminaries}
Regarding the channel model, we consider general discrete memoryless channel with the form of $p(y_{1,t},y_{2,t}|x_{1,t},x_{2,t})$. Note that this includes multiple-input multiple-output (MIMO) Gaussian case. From now on, we drop subscript $t$ in channel input and output symbols for simplicity. We also define p2p codes to be standard single codebook random code ensemble. In this ensemble, each element of a codeword is generated randomly according to $p(x_i)$.
\section{Interference channel}
\label{sec:int}
We first consider the traditional interference channel in which only $(1,0),(0,2),(1,2)$ are allowed as message assignment. This case corresponds to the current cellular system. In this paper, we are interested in the maximum sum rate characterization which corresponds to characterization of some of boundary points in the achievable region. This reflects practical interest of system throughput maximization. \\
\indent
First consider how the capacity region with p2p codes is characterized for $(1,2)$. When $(1,2)$ is used, two receivers are active at the same time. Each receiver has an option of decoding interference or not. As discussed in~\cite{BaGaKi12}, let us define $\mathcal{R}_{i,IAN}$ and $\mathcal{R}_{i,SD}$ where `IAN' stands for (treat) `interference as noise' and `SD' stands for `simultaneous decoding'. Let $\mathcal{R}_{i,IAN}$ be the set of rate pairs $(R_1,R_2)$ such that
\begin{equation}
R_i \leq I(X_i;Y_i).
\end{equation}
Let $\mathcal{R}_{i,SD}$ be the set of rate pairs $(R_1,R_2)$ such that
\begin{subequations}
\begin{eqnarray}
R_i &&\leq I(X_i;Y_i|X_j),\\
R_j &&\leq I(X_j;Y_i|X_i),\\
R_i+R_j &&\leq I(X_i, X_j; Y_i),
\end{eqnarray}
\end{subequations}
where $j \neq i$. Then, the capacity region $\mathcal{R}$ in a MAC form is given as 
\begin{equation}
\label{eq:mac}
\mathcal{R}= (\mathcal{R}_{1,IAN} \cap \mathcal{R}_{2,IAN}) \cup (\mathcal{R}_{1,SD} \cap \mathcal{R}_{2,IAN}) \cup (\mathcal{R}_{1,IAN} \cap \mathcal{R}_{2,SD})
              \cup (\mathcal{R}_{1,SD} \cap \mathcal{R}_{2,SD}).
\end{equation} 
By using this, the maximum sum rates of $(1,0),(0,2),(1,2)$ can be characterized as in Table~\ref{tab:sum}.
\begin{table}[ht]
\caption{The maximum sum rate characterization in interference channel}
\begin{center}
{
\begin{tabular}{|c|c|c|c|}
  \hline
  Scheme    & $(1,0)$ & $(0,2)$ & $(1,2)$  \\
  \hline
  \hline
  Sum rate & $I(X_1;Y_1|X_2)$ & $I(X_2;Y_2|X_1)$ & $\max_{(R_1,R_2) \in \mathcal{R}} \{R_1+R_2\}$ \\
\hline
   Sum rate (SISO Gaussian) & $\log_2 \Big(1+|h_{11}|^2P\Big)$ & $\log_2 \Big(1+|h_{22}|^2P\Big)$ & See~\eqref{eq:12g}\\
\hline
\end{tabular}
}
\end{center}
\label{tab:sum}
\end{table}
In Gaussian case as well as in the actual cellular system, there is a notion of transmit signal power. Note that combination of $(0,2)$ and $(1,0)$ corresponds to TDM. When TDM is used, there is potential to use twice of power at each transmission since each transmitter remains idle for half of time. Since it may not be easy to realize such configuration in practice due to peak power constraint, we do not allow it in this paper. The case when TDM can use twice of power at each transmission is discussed in~\cite{BaLeKa11} for SISO Gaussian case. Since sum rates of $(1,0), (0,2)$ are trivial to obtain, we only look into $(1,2)$ in detail.
\subsection{Sum rates of $(1,2)$}
The capacity region of $(1,2)$ described in~\eqref{eq:mac} implies four different operational combinations depending on what each receiver does. By considering this, we can rewrite the maximum sum rate of $(1,2)$ in a min-max form as follows. When the both receivers decodes interference, sum rate is given as
\begin{align}
R_{sum}=\min \Big\{I(X_1,X_2;Y_1), I(X_1,X_2;Y_2),I(X_1;Y_1|X_2)+I(X_2;Y_2|X_1)\Big\}.
\end{align}
When receiver 1 decodes interference while receiver 2 does not decode interference or its interference decoding attempt fails (equivalence of these two is demonstrated in~\cite{BaGaKi12}), sum rate is given as
\begin{align}
R_{sum}&= I(X_1;Y_1|X_2)+\min \Big\{I(X_2;Y_1), I(X_2;Y_2) \Big\}.
\end{align}
When receiver 2 decodes interference while receiver 1 does not decode interference or its interference decoding attempt fails, sum rate is given as
\begin{align}
R_{sum}&= I(X_2;Y_2|X_1)+\min \Big\{I(X_1;Y_1), I(X_1;Y_2) \Big\}.
\end{align}
When the both receivers do not decode interferences or their interference decoding attempts fail, sum rate is given as
\begin{align}
R_{sum}= I(X_1;Y_1)+ I(X_2;Y_2).
\end{align}
Therefore, maximum sum rate of $(1,2)$ is given as
\begin{align}
\label{eq:12}
&R_{sum}\nonumber\\
&=\max \bigg\{ \min \Big\{I(X_1,X_2;Y_1), I(X_1,X_2;Y_2),I(X_1;Y_1|X_2)+I(X_2;Y_2|X_1)\Big\},\nonumber\\
&\qquad \qquad  I(X_1;Y_1|X_2)+\min \Big\{I(X_2;Y_1), I(X_2;Y_2) \Big\},\nonumber\\
&\qquad \qquad   I(X_2;Y_2|X_1)+\min \Big\{I(X_1;Y_1), I(X_1;Y_2) \Big\},\nonumber\\
&\qquad \qquad  I(X_1;Y_1)+ I(X_2;Y_2) \bigg\}.
\end{align}
In SISO Gaussian case, it becomes
\begin{align}
\label{eq:12g}
&R_{sum}\nonumber\\
&=\max \bigg\{ \min \Big\{\log_2 \Big(1+P\sum_{i=1,2} |h_{1i}|^2\Big), \log_2 \Big(1+P\sum_{i=1,2} |h_{2i}|^2\Big), \sum_{i=1,2} \log_2 \Big(1+|h_{ii}|^2P\Big)\Big\},\nonumber\\
&\qquad \qquad  \log_2 \Big(1+|h_{11}|^2P\Big)+\min \Big\{\log_2\Big(1+ \frac{|h_{12}|^2P}{1+|h_{11}|^2P}\Big), \log_2\Big(1+ \frac{|h_{22}|^2P}{1+|h_{21}|^2P}\Big) \Big\},\nonumber\\
&\qquad \qquad   \log_2 \Big(1+|h_{22}|^2P\Big)+\min \Big\{\log_2\Big(1+ \frac{|h_{21}|^2P}{1+|h_{22}|^2P}\Big), \log_2\Big(1+ \frac{|h_{11}|^2P}{1+|h_{12}|^2P}\Big) \Big\},\nonumber\\
&\qquad \qquad  \sum_{i=1,2,j\neq i} \log_2\Big(1+ \frac{|h_{ii}|^2P}{1+|h_{ij}|^2P}    \Big) \bigg\}.
\end{align}
From the second and the third term of maximization in~\eqref{eq:12} and from Table~\ref{tab:sum}, it can be easily seen that $(1,2)$ has strictly larger sum rate than $(1,0)$ or $(0,2)$, and hence, the maximum sum rate for interference channel is always given by $(1,2)$.
\subsection{Further evaluation of sum rates of $(1,2)$}
In this section, we will focus on identifying maximizer of~\eqref{eq:12} to characterize the maximum sum rate for interference channel. As mentioned earlier, the capacity region described in~\eqref{eq:mac} and the maximum sum rate expression described in~\eqref{eq:12sum} implies four different operational combinations. By acknowledging that, we can get detailed evaluation of~\eqref{eq:12sum} in a very simple manner. Let us define another tuple $(b_1,b_2)$ in which $b_1$ and $b_2$ take $\circ$ or $\times$. $b_1$ indicates whether receiver 1 decodes interference or not while $b_2$ does the same for receiver 2. For example, $(\circ, \times)$ corresponds to receiver 1 decoding interference and receiver 2 not decoding. For ease of exposition, we would also like to define several rate terms. Let $R_{ij,IF}=I(X_j;Y_i|X_{\overline{j}})$ and $R_{ij,IAN}=I(X_j;Y_i)$, where $\overline{j}=2$ if $j=1$ and $\overline{j}=1$ otherwise. $R_{ij,IF}$ corresponds to the maximum decodable rate of $x_j$ at receiver $i$ when the other message is known where IF stands for `interference free'. Similarly, $R_{ij,IAN}$ corresponds to the maximum decodable rate of $x_j$ at receiver $i$ when the other message is treated as noise where IAN stands for `interference as noise'.
\begin{theorem}
\label{thm:12}
The maximum sum rate of $(1,2)$ is given as follows.
\begin{align}
\label{eq:12sum}
R_{sum}=\begin{cases}
\sum_{i=1,2} R_{ii,IF} & \text{if } R_{12,IAN}>R_{22,IF}, R_{21,IAN}>R_{11,IF} \text{ (a) } \\
\sum_{i=1,2} R_{ii,IAN} & \text{if } R_{12,IF}<R_{22,IAN}, R_{21,IF}<R_{11,IAN} \text{ (b) } \\
R_{11,IF}+\min\{ R_{12,IAN},  R_{22,IAN}\} & \text{if } I(X_1,X_2;Y_1)>I(X_1,X_2;Y_2), R_{11,IF}>R_{21,IF},\\
                                           & \text{not if (a), (b)}\\
R_{22,IF}+\min\{ R_{11,IAN},  R_{21,IAN}\} & \text{if } I(X_1,X_2;Y_1)<I(X_1,X_2;Y_2), R_{12,IF}<R_{22,IF}, \\
                                           & \text{not if (a), (b)}\\
\min \Big\{I(X_1,X_2;Y_1), I(X_1,X_2;Y_2)\Big\} & \text{otherwise.}
\end{cases}
\end{align}
\end{theorem}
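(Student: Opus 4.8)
The plan is to show that the case-wise formula \eqref{eq:12sum} simply reads off the maximizer of the four-way maximum in \eqref{eq:12} region by region. Adopting the shorthand $S_1=I(X_1,X_2;Y_1)$, $S_2=I(X_1,X_2;Y_2)$ and $T=R_{11,IF}+R_{22,IF}$, the chain rule supplies the two decompositions $S_1=R_{11,IF}+R_{12,IAN}=R_{11,IAN}+R_{12,IF}$ and $S_2=R_{22,IF}+R_{21,IAN}=R_{22,IAN}+R_{21,IF}$. The four candidates in \eqref{eq:12} are $A=\min\{S_1,S_2,T\}$, $B=R_{11,IF}+\min\{R_{12,IAN},R_{22,IAN}\}$, $C=R_{22,IF}+\min\{R_{11,IAN},R_{21,IAN}\}$ and $D=R_{11,IAN}+R_{22,IAN}$. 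The first thing I would record is the monotonicity $R_{ij,IF}\ge R_{ij,IAN}$ for all $i,j$, valid because $X_1$ and $X_2$ are independent so that conditioning on the interfering input can only raise the mutual information. Distributing the minima through the decompositions then gives the compact forms $B=\min\{S_1,\,R_{11,IF}+R_{22,IAN}\}$ and $C=\min\{S_2,\,R_{22,IF}+R_{11,IAN}\}$, which are what make the later comparisons mechanical.

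Next I would translate each case hypothesis into a single inequality between two of the four candidates. Since $R_{11,IF}+R_{22,IAN}\ge D$ always holds, one obtains $B\ge D\Leftrightarrow S_1\ge D\Leftrightarrow R_{12,IF}\ge R_{22,IAN}$, and symmetrically $C\ge D\Leftrightarrow R_{21,IF}\ge R_{11,IAN}$; thus hypothesis (b) is exactly the statement ``$B<D$ and $C<D$''. In the same way $S_1>T\Leftrightarrow R_{12,IAN}>R_{22,IF}$ and $S_2>T\Leftrightarrow R_{21,IAN}>R_{11,IF}$, so hypothesis (a) is exactly ``$S_1>T$ and $S_2>T$''. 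A useful by-product is that whenever (a) fails one has $\min\{S_1,S_2\}\le T$, so that $A=\min\{S_1,S_2\}$ holds throughout the complement of region (a).

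With these equivalences in hand the five regions are dispatched in turn. In region (a), $S_1,S_2>T$ forces $A=T$, while $B\le R_{11,IF}+R_{22,IAN}\le T$, $C\le R_{22,IF}+R_{11,IAN}\le T$ and $D\le T$ by monotonicity, so the maximum is $A=T=\sum_i R_{ii,IF}$. In region (b), $A\le S_1=R_{11,IAN}+R_{12,IF}<D$ together with $B<D$ and $C<D$ makes the maximum $D=\sum_i R_{ii,IAN}$. For the remaining three regions (a) fails, hence $A=\min\{S_1,S_2\}$, and (b) fails, hence $D\le\max\{B,C\}$, so the problem collapses to $\max\{A,B,C\}$. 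Under the region-(c) hypotheses $S_1>S_2$ and $R_{11,IF}>R_{21,IF}$, the second one is equivalent via $S_2=R_{22,IAN}+R_{21,IF}$ to $R_{11,IF}+R_{22,IAN}\ge S_2$, so $B=\min\{S_1,R_{11,IF}+R_{22,IAN}\}\ge S_2=A\ge C$ and the maximum is $B$; region (d) is the mirror image under the index swap $1\leftrightarrow2$ and yields $C$.

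Finally, region (e) is where neither (c) nor (d) holds, and the equivalences are run backwards: if $S_1>S_2$ but $R_{11,IF}\le R_{21,IF}$ then $R_{11,IF}+R_{22,IAN}\le S_2$ forces $B\le S_2=A$, while $C\le S_2=A$ trivially, and (b) failing gives $D\le\max\{B,C\}\le A$, so the maximum is $A=\min\{S_1,S_2\}$; the subcases $S_1<S_2$ and $S_1=S_2$ follow by symmetry and by $B,C\le\min\{S_1,S_2\}$. I expect the last two paragraphs' bookkeeping, rather than any individual inequality, to be the main obstacle: one must verify that the five hypotheses are exhaustive and that on their shared boundaries the strict/non-strict tie-breaking is consistent with the continuity of the true maximum, and this requires applying each of the several decompositions of $S_1$ and $S_2$ in the correct direction.
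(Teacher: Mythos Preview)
Your argument is correct and follows essentially the same logic as the paper's proof: both identify which of the four terms in \eqref{eq:12} achieves the maximum by the same pairwise comparisons (e.g., $R_{12,IF}$ versus $R_{22,IAN}$ to decide between $(\circ,\times)$ and $(\times,\times)$, and $R_{11,IF}$ versus $R_{21,IF}$ to decide between $(\circ,\times)$ and $(\circ,\circ)$). The only difference is presentational---you run the comparisons algebraically via the chain-rule decompositions of $S_1$ and $S_2$, whereas the paper appeals to water-level diagrams for the same inequalities---so your write-up is in fact the more explicit of the two.
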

Before we prove the above result, let us try to interpret what it implies. First of all, if interfering links at the both transmitters are very strong, then interferences are decoded at the both receivers, and the system achieves interference-free sum rate. If interfering links at the both transmitters are very weak, then no receiver decodes interference. When two MAC channels at receivers do not have the same quality, only the better MAC receiver decodes interference if interference decodability at the worse MAC receiver is below certain threshold. Otherwise, the both receivers decode interference.
\begin{proof}
Consider first identifying minimizer in sum rate expression of $(\circ,\circ)$. We can easily get the following. The maximum sum rate of $(\circ,\circ)$ is
\begin{align}
R_{sum}=\begin{cases}\sum_{i=1,2} R_{ii,IF} &\text{if } R_{12,IAN}>R_{22,IF}, R_{21,IAN}>R_{11,IF}\\
\min \Big\{I(X_1,X_2;Y_1), I(X_1,X_2;Y_2)\Big\} &\text{otherwise.}
\end{cases}
\end{align}
It can easily be seen that sum rate of $(1,2)$ can never be greater than $\sum_{i=1,2} R_{ii,IF}$. Hence, we only need to consider the case when $\sum_{i=1,2} R_{ii,IF}$ is not the maximum sum rate of $(\circ,\circ)$ for further evaluation of sum rate of $(1,2)$.\\
\indent
Consider now identifying maximizer between $(\circ,\times)$ and $(\times, \times)$. In this case, the constraint on $R_1$ solely comes from decodability at receiver 1, and this enables fairly simple identification. Figure~\ref{fig:ox} describes determination process of the sum rate maximizer between $(\circ,\times)$ and $(\times, \times)$. It is based on well-known water (power)-level diagram  in SISO Gaussian case resulted from superposition of signals at each receiver. 
\begin{figure}[t]
  \includegraphics[width=\textwidth]{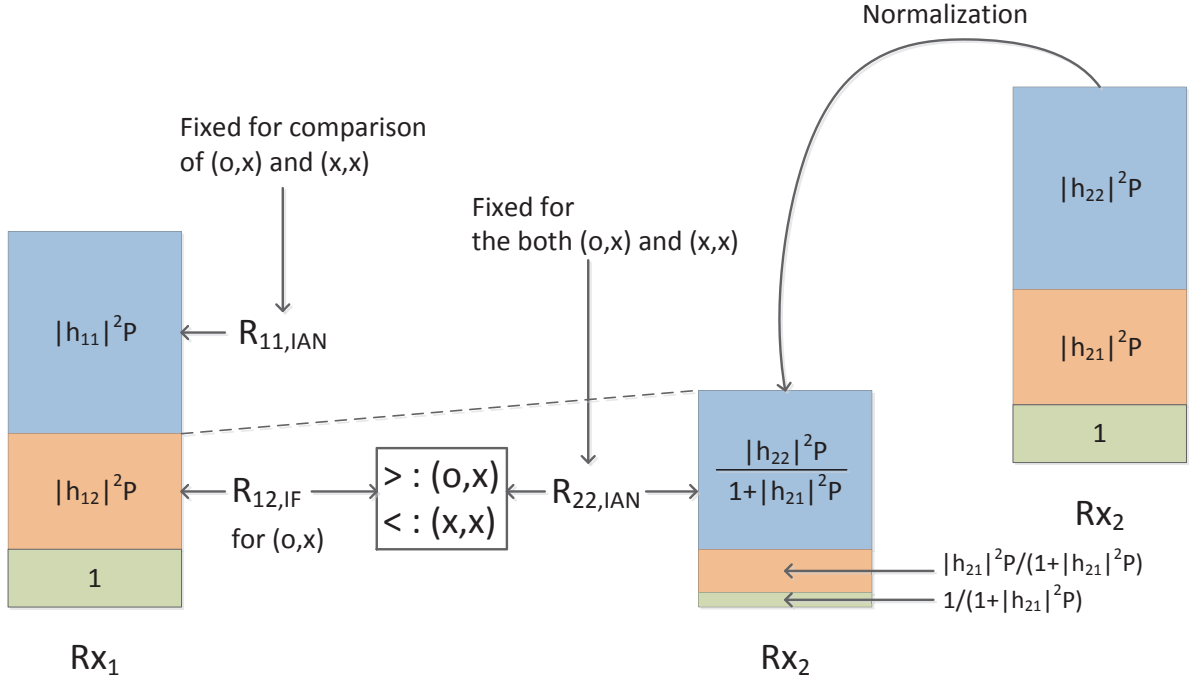}
  \caption{Determination of the sum rate maximizer between $(\circ,\times)$ and $(\times, \times)$.}
  \label{fig:ox}
\end{figure}
Although other general cases may not be expressed with such diagram, the exact same intuition carries over. Note that all we need to check here is whether decoding requirement of $x_2$ at receiver 1 prevents from setting $R_2=R_{22,IAN}$ or not. If it does, then receiver 1 should not decode interference to maximize the sum rate. This can be done by simply comparing $R_{12,IF}$ with $R_{22,IAN}$. We can do similarly for $(\times,\circ)$ and $(\times, \times)$ by comparing $R_{21,IF}$ with $R_{11,IAN}$. Consequently, $(\times, \times)$ attains the maximum sum rate among $(\circ,\times),(\times,\circ)$, $(\times, \times)$ if and only if $R_{12,IF}<R_{22,IAN}$ and $R_{21,IF}<R_{11,IAN}$. When it does, it satisfies $\sum_{i=1,2} R_{ii,IAN}>\max\{I(X_1,X_2;Y_1), I(X_1,X_2;Y_2)\}$, which means $(\times, \times)$ gives the maximum sum rate of $(1,2)$ if $R_{12,IF}<R_{22,IAN}$ and $R_{21,IF}<R_{11,IAN}$ and not if $R_{12,IAN}>R_{22,IF}$ and $R_{21,IAN}>R_{11,IF}$. We can easily see, however, that ($R_{12,IF}<R_{22,IAN}$, $R_{21,IF}<R_{11,IAN}$) and ($R_{12,IAN}>R_{22,IF}$, $R_{21,IAN}>R_{11,IF}$) are mutually exclusive.\\
\indent
It remains to characterize the maximum sum rate among $(\circ,\times),(\times,\circ),(\circ, \circ)$ when none of ($R_{12,IF}<R_{22,IAN}$, $R_{21,IF}<R_{11,IAN}$) and ($R_{12,IAN}>R_{22,IF}$, $R_{21,IAN}>R_{11,IF}$) holds. It can be seen from~\eqref{eq:12} that sum rate of $(\circ,\times)$ can never be greater than $I(X_1,X_2;Y_1)$, and sum rate of $(\times, \circ)$ can never be greater than $I(X_1,X_2;Y_2)$. This means that $(\circ,\times)$ can attain the maximum sum rate of $(1,2)$ only if $I(X_1,X_2;Y_1)>I(X_1,X_2;Y_2)$, and $(\times,\circ)$ does only if $I(X_1,X_2;Y_1)<I(X_1,X_2;Y_2)$.
We now first compare $(\circ,\times)$ and $(\circ, \circ)$ when $I(X_1,X_2;Y_1)>I(X_1,X_2;Y_2)$ holds. This is very similar to comparison of $(\circ,\times)$ and $(\times, \times)$, and Figure~\ref{fig:ox1} describes determination process of the sum rate maximizer between $(\circ,\times)$ and $(\circ, \circ)$. 
\begin{figure}[t]
  \includegraphics[width=0.8\textwidth]{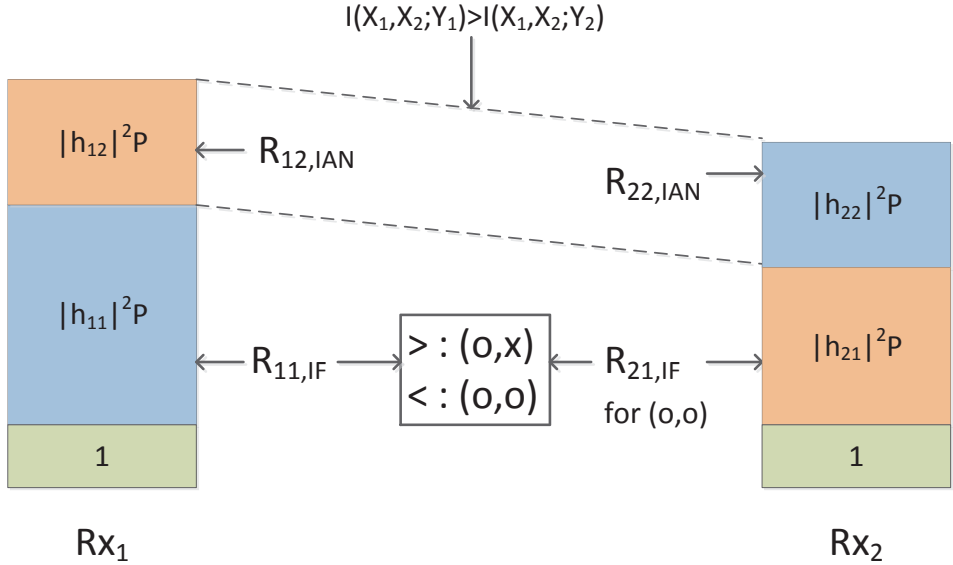}
  \caption{Determination of the sum rate maximizer between $(\circ,\times)$ and $(\circ, \circ)$.}
  \label{fig:ox1}
\end{figure}
We need to check whether decoding requirement of $x_1$ at receiver 2 becomes bottleneck of determining $R_1$ or not. If it does, then receiver 2 should not decode interference to maximize the sum rate. This can be done by simply comparing $R_{11,IF}$ with $R_{21,IF}$. We can do similarly for $(\times,\circ)$ and $(\circ, \circ)$ when $I(X_1,X_2;Y_1)<I(X_1,X_2;Y_2)$ holds by comparing $R_{12,IF}$ with $R_{22,IF}$. Consequently, we get~\eqref{eq:12sum}.
\end{proof}
Proof of Theorem~\ref{thm:12} gives an interesting implementation idea in the cellular system when only one of two receivers has capability of decoding interference. Rate determination of this case is essentially described in Figure~\ref{fig:ox}, and its implementation scenario is described in Figure~\ref{fig:imp}. 
\begin{figure}[t]
  \includegraphics[width=\textwidth]{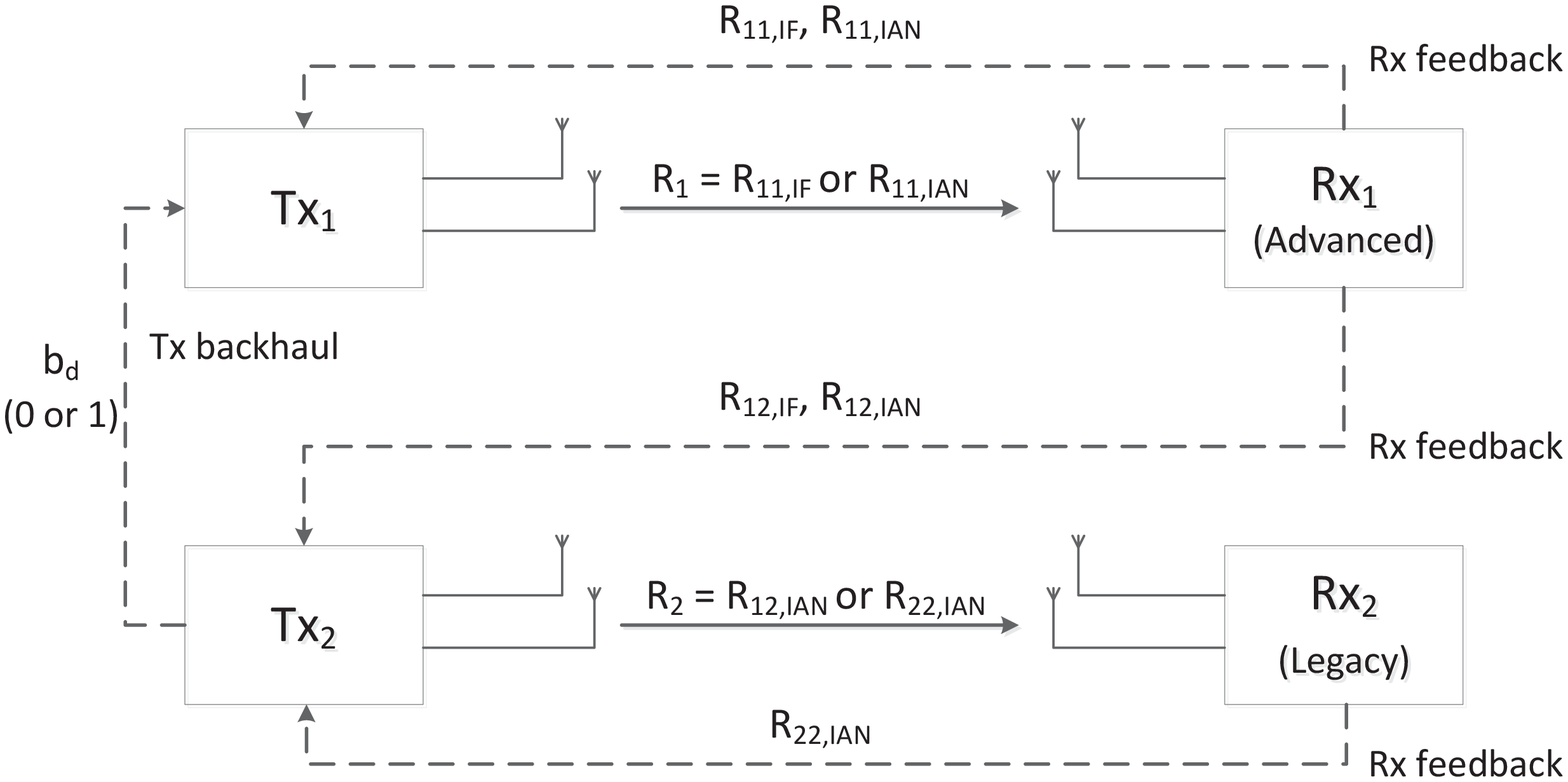}
  \caption{Implementation scenario of rate determination with one legacy and one advanced receiver: interference channel}
  \label{fig:imp}
\end{figure}
The procedure is also described in Algorithm~\ref{alg:imp}.
\begin{algorithm}
\caption{Determination of transmission rate with one legacy and one advanced receiver: interference channel}
\label{alg:imp}
\begin{algorithmic}
\Require Rx's broadcast interference decoding capability
\Require Advanced Rx (Rx 1) feedback: $R_{11,IF}$, $R_{11,IAN}$ to Tx 1
\Require Advanced Rx (Rx 1) feedback: $R_{12,IF}$, $R_{12,IAN}$ to Tx 2
\Require Legacy Rx (Rx 2) feedback: $R_{22,IAN}$ to Tx 2
\State At Tx 2,
    \If{$R_{22,IAN}>R_{12,IF}$}
        \State $b_d \Leftarrow 0$
        \State $R_2 \Leftarrow R_{22,IAN}$
    \Else [$R_{22,IAN} \leq R_{12,IF}$]
        \State $b_d \Leftarrow 1$
        \State $R_2 \Leftarrow \min\{R_{12,IAN}, R_{22,IAN} \}$
    \EndIf
\Require Tx 2 notify $b_d$ to Tx 1
\State At Tx 1,
        \If{$b_d=0$}
            \State $R_1 \Leftarrow R_{11,IAN}$
            
        \Else[$b_d=1$]
            \State $R_1 \Leftarrow R_{11,IF}$
            
        \EndIf
\end{algorithmic}
\end{algorithm}
 A fascinating nature of this procedure is that two transmitters' rate determination is almost isolated except for notification bit $b_d$ which greatly reduces backhaul overhead. \\
\indent
Unfortunately, more sophisticated joint rate determination is required when the both receivers have interference decoding capability. This procedure is essentially captured in~\eqref{eq:12sum}. In this case, the both receivers still broadcast their capabilities and feed back achievable rates as in the previous example. This means that the receiver procedure would be exactly the same no matter what type of receiver is at the other side, and only the transmitter procedure is changed which is desirable in the cellular system.
\section{Generalized interference channel}
\label{sec:gen}
In this section, we now allow more generalized message assignment at transmitters for interference channel. In other words, $(0,1),(1,1),(2,0),(2,1),(2,2)$ are now allowed tuples. Maximum sum rates of these are characterized in Table~\ref{tab:sum1} except for $(2,1)$ which can be obtained in similar ways to $(1,2)$.
\begin{table}[ht]
\caption{Maximum sum rate characterization in interference channel}
\begin{center}
{
\begin{tabular}{|c|c|c|}
  \hline
  Scheme    & $(2,0)$ & $(0,1)$ \\
  \hline
  Sum rate & $I(X_1;Y_2|X_2)$ & $I(X_2;Y_1|X_1)$ \\
\hline
   Sum rate (SISO Gaussian) & $\log_2 \Big(1+|h_{21}|^2P\Big)$ & $\log_2 \Big(1+|h_{12}|^2P\Big)$ \\
\hline
\hline
  Scheme    & $(1,1)$  & $(2,2)$\\
  \hline
  Sum rate & $I(X_1,X_2;Y_1)$ & $I(X_1,X_2;Y_2)$\\
\hline
   Sum rate (SISO Gaussian) & $\log_2 \Big(1+P(|h_{11}|^2+|h_{12}|^2)\Big)$ & $\log_2 \Big(1+P(|h_{21}|^2+|h_{22}|^2)\Big)$\\
\hline
\end{tabular}
}
\end{center}
\label{tab:sum1}
\end{table}
It can be clearly seen from Table~\ref{tab:sum1} that $(2,0)$ or $(0,1)$ has strictly smaller sum rate than those of $(1,1)$ or $(2,2)$. Furthermore, sum rates of $(\circ,\times),(\times,\circ),(\circ, \circ)$ of $(1,2)$ cannot be larger than those of $(1,1)$ or $(2,2)$ as can be seen by Table~\ref{tab:sum} and Table~\ref{tab:sum1}. Similar arguments can be made with $(2,1)$. In other words, the maximum sum rate of generalized interference channel is given by $(1,1)$, $(2,2)$, $(\times,\times)$ of $(1,2)$, $(\times,\times)$ of $(2,1)$. The following theorem characterizes the maximum sum rate of the generalized interference channel.
\begin{theorem}
\label{thm:11}
The maximum sum rate of the generalized interference channel is given as follows.
\begin{align}
\label{eq:11}
R_{sum}=\begin{cases}
\sum_{i=1,2} R_{ii,IAN} & \text{if } R_{12,IF}<R_{22,IAN}, R_{21,IF}<R_{11,IAN}\\
\sum_{i=1,2} R_{i\overline{i},IAN} & \text{if } R_{11,IF}<R_{21,IAN}, R_{22,IF}<R_{12,IAN}\\
\max \Big\{I(X_1,X_2;Y_1), I(X_1,X_2;Y_2)\Big\} & \text{otherwise.}
\end{cases}
\end{align}
\end{theorem}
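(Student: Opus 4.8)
The plan is to start from the reduction established in the paragraph preceding the theorem: the maximum sum rate of the generalized interference channel is the maximum of four candidate quantities, namely the two joint-decoding rates $I(X_1,X_2;Y_1)$ and $I(X_1,X_2;Y_2)$ (from $(1,1)$ and $(2,2)$) together with the two treat-everything-as-noise rates $\sum_{i=1,2} R_{ii,IAN}$ (from $(\times,\times)$ of $(1,2)$) and $\sum_{i=1,2} R_{i\overline{i},IAN}$ (from $(\times,\times)$ of $(2,1)$). Writing $M_1=I(X_1,X_2;Y_1)$ and $M_2=I(X_1,X_2;Y_2)$, the entire statement reduces to evaluating $\max\{M_1,M_2,\sum_{i=1,2} R_{ii,IAN},\sum_{i=1,2} R_{i\overline{i},IAN}\}$ and matching the outcome to the three listed cases.

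The first step is to rewrite the two all-IAN candidates using the chain rule for mutual information. Since $M_1=R_{11,IAN}+R_{12,IF}=R_{12,IAN}+R_{11,IF}$ and $M_2=R_{22,IAN}+R_{21,IF}=R_{21,IAN}+R_{22,IF}$, I obtain $\sum_{i=1,2} R_{ii,IAN}=M_1+M_2-S$ and $\sum_{i=1,2} R_{i\overline{i},IAN}=M_1+M_2-T$, where $S=R_{12,IF}+R_{21,IF}$ and $T=R_{11,IF}+R_{22,IF}$. A short rearrangement then shows that the hypotheses of the first case are exactly $S<\min\{M_1,M_2\}$ (indeed $R_{12,IF}<R_{22,IAN}$ is $S<M_2$ and $R_{21,IF}<R_{11,IAN}$ is $S<M_1$), and the hypotheses of the second case are exactly $T<\min\{M_1,M_2\}$. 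This collapses the theorem to analyzing $\max\{M_1,M_2,M_1+M_2-S,M_1+M_2-T\}=\max\{\max\{M_1,M_2\},\,M_1+M_2-\min\{S,T\}\}$.

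The crux, and the step I expect to be the main obstacle, is proving that the two conditions are mutually exclusive, i.e. $\max\{S,T\}\geq\min\{M_1,M_2\}$. I would establish the stronger bound $S+T\geq M_1+M_2$, from which $\max\{S,T\}\geq\tfrac{1}{2}(S+T)\geq\tfrac{1}{2}(M_1+M_2)\geq\min\{M_1,M_2\}$ is immediate. Using the chain-rule identities once more, $S+T-(M_1+M_2)=(R_{12,IF}-R_{12,IAN})+(R_{21,IF}-R_{21,IAN})$, and because the p2p ensemble generates $X_1$ and $X_2$ independently each term is a nonnegative gap of the form $I(X_2;Y_1|X_1)-I(X_2;Y_1)=I(X_1;X_2|Y_1)\geq 0$ (and symmetrically $I(X_1;Y_2|X_2)-I(X_1;Y_2)=I(X_1;X_2|Y_2)\geq 0$ at receiver $2$). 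Hence $S+T\geq M_1+M_2$. The independence of the inputs is precisely what drives this inequality, and it is the one genuinely nonroutine ingredient of the argument.

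Finally I would assemble the case analysis. If the first condition holds then $S<\min\{M_1,M_2\}$ gives $\sum_{i=1,2} R_{ii,IAN}=M_1+M_2-S>\max\{M_1,M_2\}$; moreover mutual exclusivity forces $T\geq\min\{M_1,M_2\}>S$, so $\sum_{i=1,2} R_{ii,IAN}>\sum_{i=1,2} R_{i\overline{i},IAN}$, and this candidate is the overall maximum. The second case is symmetric. In the remaining case neither $S$ nor $T$ lies below $\min\{M_1,M_2\}$, so both IAN candidates are at most $\max\{M_1,M_2\}$ and the maximum equals $\max\{M_1,M_2\}$, which yields the stated formula and completes the proof.
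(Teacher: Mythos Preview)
Your argument is correct and in fact cleaner than what the paper does. The paper's proof is essentially a one-paragraph pointer back to the reasoning in Theorem~\ref{thm:12}: it observes that comparing $(1,1)$ with $(\times,\times)$ of $(1,2)$ is operationally the same as the $(\circ,\times)$ versus $(\times,\times)$ comparison already carried out there (and symmetrically for the other pairings), and appeals to the water-level diagram intuition of Figure~\ref{fig:ox}. Your route is instead purely algebraic: you reparametrize the four candidates as $M_1$, $M_2$, $M_1+M_2-S$, $M_1+M_2-T$ via the chain rule, translate the two hypotheses into $S<\min\{M_1,M_2\}$ and $T<\min\{M_1,M_2\}$, and then prove mutual exclusivity through the inequality $S+T\geq M_1+M_2$, which you derive from the independence of $X_1$ and $X_2$ (each difference $R_{ij,IF}-R_{ij,IAN}$ equals $I(X_1;X_2\mid Y_i)\geq 0$). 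This is a genuinely different decomposition: the paper leans on prior operational comparisons and pictures, while you give a self-contained information-theoretic argument that makes the mutual exclusivity of the first two cases completely explicit. What your approach buys is independence from the earlier theorem and a transparent reason \emph{why} the two IAN regimes cannot overlap; what the paper's approach buys is brevity, since the work was already done in proving Theorem~\ref{thm:12}.
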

\begin{proof}
We first compare $(1,1)$ and $(\times,\times)$ of $(1,2)$. It is equivalent to comparison of $(\circ,\times)$ and $(\times,\times)$ of $(1,2)$ which is described in Figure~\ref{fig:ox}. We can make similar arguments with comparison of $(2,2)$ and $(\times,\times)$ of $(1,2)$, i.e., it is equivalent to comparison of $(\times,\circ)$ and $(\times,\times)$ of $(1,2)$. Very similar arguments hold for $(2,1)$, all of which eventually lead to~\eqref{eq:11}.
\end{proof}
We can think of an algorithm (Algorithm~\ref{alg:imp1}) which is similar to Algorithm~\ref{alg:imp} for generalized interference channel when only one receiver has interference decoding capability. 
\begin{algorithm}
\caption{Determination of transmission rate with one legacy and one advanced receiver:generalized interference channel}
\label{alg:imp1}
\begin{algorithmic}
\Require Rx's broadcast interference decoding capability
\Require Advanced Rx (Rx 1) feedback: $R_{11,IF}$, $R_{11,IAN}$ to Tx 1
\Require Advanced Rx (Rx 1) feedback: $R_{12,IF}$, $R_{12,IAN}$ to Tx 2
\Require Legacy Rx (Rx 2) feedback: $R_{21,IAN}$ to Tx 1
\Require Legacy Rx (Rx 2) feedback: $R_{22,IAN}$ to Tx 2
\State At Tx 2,
    \If{$R_{22,IAN}>R_{12,IF}$}
        \State $b_{d} \Leftarrow 0$
        \State $R_2 \Leftarrow R_{22,IAN}$
    \Else [$R_{22,IAN} \leq R_{12,IF}$]
        \State $b_d \Leftarrow 1$
        \State $R_2 \Leftarrow R_{12,IAN}$
    \EndIf
\Require Tx 2 notify $b_d$ to Tx 1
\State At Tx 1,
        \If{$b_d=0$}
            \State $R_1 \Leftarrow R_{11,IAN}$
        \Else[$b_d=1$]
          \If{$R_{21,IAN}>R_{11,IF}$}
           \State $R_1 \Leftarrow R_{21,IAN}$
          \Else [$R_{21,IAN} \leq R_{11,IF}$]
           \State $R_1 \Leftarrow R_{11,IF}$
            \EndIf
        \EndIf
\end{algorithmic}
\end{algorithm}
We can also consider this when the candidate receiver of joint transmission is predetermined, i.e., only one of $(1,1)$ and $(2,2)$ is allowed.\\
\indent
The fact that joint transmission provides the maximum sum rate of generalized interference channel unless interference is very weak with the traditional message assignment demonstrates advantage of more liberal message assignments combined with advanced receivers in the cellular system. Note that even with the traditional message assignment, $(1,2)$ can also attain the same sum rate as joint transmission in some cases as can be seen in~\eqref{eq:12sum}. First, when $I(X_1,X_2;Y_1)=I(X_1,X_2;Y_2)$, i.e., two MAC channels at receivers are of the same quality which would be prevalent at the exact cell-edge in the cellular system. Note that the both receivers should have interference decoding capability in this case. Secondly, when $I(X_1,X_2;Y_1)\neq I(X_1,X_2;Y_2)$. Without loss of generality, let us assume $I(X_1,X_2;Y_1)> I(X_1,X_2;Y_2)$. In this case, $(1,2)$ attains the same maximum rate as joint transmission if interference is not too strong ($R_{11,IF}>R_{21,IF}$ and $R_{12,IAN}<  R_{22,IAN}$) when at least one receiver has interference decoding capability.
\section{Conclusion}
\label{sec:con}
As seen in this paper, the capacity region for interference networks with p2p codes can be evaluated in simple ways by acknowledging operational characteristics in the expression of such region. This type of detailed evaluation provides considerably more insights on channel condition which activates specific decoding operations than what MAC or min-max form does. Such insights can be used to set up a meaningful guideline for implementation of interference coordination in the wireless system while maintaining the current encoding and decoding structure. \\
\indent
As mentioned earlier, a three user case would also be interesting to consider although evaluation of such case would be considerably more complicated than that of two user case. Due to detection and decoding complexity and performance limitation in the practical MIMO system, it is unlikely that a joint decoding receiver with capability of decoding all incoming layers from three transmitters is practically realized. In this case, restricting decodable user set to be two would be viable alternative, which would also result in much simpler evaluation. 

\bibliographystyle{IEEEtran}
\bibliography{bae}
\end{document}